%% file: root.tex
\documentclass[letterpaper, 10 pt, conference]{ieeeconf}
\input{Packages}

\input{CustomCommands}

\title{\LARGE \bf
Data-Driven Personalization of Automated Insulin Delivery}

\author{Ali Kashani$^{1}$, Ali Tavasoli$^{2}$, and Heman Shakeri$^{1}$% <-this % stops a space
}

\begin{document}
\bstctlcite{BSTcontrol}
\maketitle
\thispagestyle{empty}
\pagestyle{empty}
\begingroup
\renewcommand\thefootnote{}
\footnotetext{
$^{1}$Ali Kashani ({fww9ba@virginia.edu}), and Heman Shakeri ({hs9hd@virginia.edu}), are with the School of Data Science, University of Virginia, Charlottesville, VA 22903.

$^{2}$Ali Tavasoli ({hqyq4k@jmu.edu}), is with the Department of Mathematics and Statistics, James Madison University, Harrisonburg, VA 22807.

The authors gratefully acknowledge the support of the LaunchPad for Diabetes program at the University of Virginia.}
\endgroup
\begin{abstract}
Automated insulin delivery (AID) systems are often tuned for the population and offer limited online adaptation to the inter- and intrapatient variability in insulin needs caused by meal patterns, physical activity, and fluctuations in insulin sensitivity. We present a real-time, data-driven personalization approach that adapts controller parameters using the subject's daily glycemic data. The adaptation is formulated as projected gradient descent on a daily risk metric, where the gradient estimation is designed to attenuate noise and metabolic variability. We use contraction theory to validate the optimization framework and convergence of the closed-loop system under adaptation. \textit{In silico} experiments on the $100$-adult cohort of the FDA-accepted UVA/Padova T1D simulator show that our method improves glycemic risk and increases time-in-range (TIR, $70$--$180$\,mg/dL) by $2$\%, $3$\%, and $4$\% after $4$, $8$, and $17$ weeks, respectively, under variability in meal timing, meal size, and insulin sensitivity.
\end{abstract}

\section{Introduction}
\Ac{t1d} is an autoimmune disease causing the loss of endogenous insulin secretion. Therefore, exogenous insulin should be delivered to reduce average glucose and postprandial excursions while avoiding hypoglycemia. \Ac{aid} systems are an integration of \ac{cgm}, insulin infusion, and feedback control. Among the available control strategies, \ac{mpc} is especially attractive because it can encode clinical objectives, such as target glucose zones, insulin-on-board limits, and asymmetric hypo/hyperglycemia penalties, directly in a receding-horizon optimization problem~\cite{gondhalekar2016periodic, shi2018adaptive,  moscoso2023reshaping}.

Despite advances, most current \ac{aid} controllers are tuned at the population level. A common approach is optimizing controller parameters by exhaustive search over a cohort of virtual subjects in the FDA-approved UVA/Padova \ac{t1d} simulator~\cite{ shi2018adaptive, Kovatchev2009InSilicoT1D}. However, population-based tuning prioritizes safety across the entire population, which may be conservative for individuals whose glycemic profile differs from the average population~\cite{Magni2007RunToRunMPC}. This is important in \ac{t1d}, since insulin needs vary across subjects and within the same subject over time due to meal composition, physical activity, stress, circadian effects, and fluctuations in insulin sensitivity~\cite{hinshaw2013diurnal, wiegand2008daily}. Diurnal insulin sensitivity patterns are known to be strongly individual-specific and difficult to extrapolate across the population~\cite{hinshaw2013diurnal}. These considerations motivate online personalization mechanisms that adapt controller behavior directly from an individual's online data.

In~\cite{pryor2025pogo}, an individual's \ac{tdi} is estimated from their history and used to scale the controller. In~\cite{moscoso2023performance, villa2022model}, model-based adaptive schemes are employed to re-estimate model parameters, such as insulin sensitivity, used within the \ac{mpc}. While effective, such approaches depend on the identifiability and fidelity of the underlying physiological model. An alternative is direct adaptation, in which controller parameters are tuned from glycemic outcomes, without re-identifying physiology. One example of direct adaptation is \ac{esc}, where controller parameters are perturbed, the gradient of a risk metric is estimated through modulation/demodulation, and the parameters are updated by gradient descent~\cite{cao2017extremum}. However, in \ac{aid} applications, the combination of \ac{cgm} noise, metabolic and behavioral variability, and changing operating conditions makes gradient estimation challenging. In standard \ac{esc}, improving robustness requires stronger temporal low-pass filtering, which can slow convergence and introduce oscillations as the operating point drifts.

In this work, we present a real-time, performance-based adaptation scheme for \ac{aid} that personalizes controller parameters using the individual's glycemic data. Our approach combines projected gradient descent with a \ac{rwls} gradient estimator. The estimator emphasizes past observations local to the current value and regularizes updates based on the level of locality, thereby improving robustness to day-to-day metabolic variability and noise. We prove that the resulting estimation error decomposes into a relative bias that scales with the persistent excitation condition, noise, and metabolic variability.

% A central analytical challenge is that the daily glycemic risk depends on the full closed-loop glucose trajectory, not only on the controller parameters. To address this, we use contraction theory to show that, under a nominal daily routine, the closed-loop glucose dynamics entrain to a unique parameterized periodic orbit. This induces a well-defined steady-state daily cost $\bar J(\theta)$ and reduces the run-to-run personalization problem to optimization over the controller parameter space. The transient mismatch between the realized daily cost and this steady-state cost is then treated separately in the stability analysis.
 
The main contributions of this paper are as follows:
\begin{enumerate}
    \item[(i)] We develop an \ac{rwls} gradient estimator with spatial proximity weighting and adaptive regularization for \ac{aid} personalization. The \ac{rwls} gradient estimator admits an error bound which is controlled under curvature regularity.
    
    \item[(ii)] We show, using contraction theory, that under a nominal daily routine the closed-loop glucose dynamics reduce to a periodic orbit. This yields a well-defined steady-state risk $\bar J(\theta)$ and provides a basis for analyzing the adaptation law as projected gradient descent.
    
    \item[(iii)] We show the system closed-loop with \ac{pas} is \ac{iss}, and validate it through \textit{in silico} trials on the FDA-approved UVA/Padova \ac{t1d} simulator under realistic meal, behavioral, and metabolic variability.
\end{enumerate}

\subsection*{Notation and definitions:}
Let \(\reals_+ := \{x\in\reals: x\ge 0\}\) and $\naturals_a^b = \{i \in \naturals : a \le i \le b\}$, where \(\reals\) and $\naturals$ denote real and integer numbers, respectively. A differentiable function \(J:\mathbb{R}^{n_\theta}\to\mathbb{R}\) is \(L_J\)-smooth in a domain $\Theta$ if \(\|\nabla J(\theta_1)-\nabla J(\theta_2)\|\le L_J\|\theta_1-\theta_2\|\) for all \(\theta_1,\theta_2\in\Theta\), and it is \(\mu\)-strongly convex if \(J(\theta_2)\ge J(\theta_1)+\nabla J(\theta_1)^\top(\theta_2-\theta_1)+\frac{\mu}{2}\|\theta_2-\theta_1\|^2\). Let $\Pi_{\mathcal{T}}(x)$ denote the standard Euclidean projection of $x$ onto a convex set $\mathcal{T}$. The interior of a set $\mathcal{T}$ is denoted by $\operatorname{int}(\mathcal{T})$.

\section{Personalized insulin delivery problem}
Consider the following dynamics for blood glucose,
\begin{subequations}
\label{eq:dynamics}
\begin{align}
\dot x&=f(x, u, d),\\
G&=g(x) + \nu,
\end{align}
\end{subequations}
where $x\in\reals^n$ is the system's state, $f:\reals^n\times\reals_+\times\reals^n\to\reals^n$ is the state transition map, $u\in\mathcal{U}\subseteq\reals_+$ is insulin, $d\in\mathcal{D}\subseteq\reals^n$ is the disturbance, $G\in\reals_+$ is the \ac{cgm} reading and $\nu\in\reals$ is \ac{cgm} noise. The disturbance $d$ captures any deviation from a baseline fasting glucose response including carbohydrate intake, and variability in behavior (e.g. physical activity) and metabolism (e.g. \ac{is}). 

The objective is minimizing the following glycemic risk
\begin{subequations}
\label{eq:risk_all}
\begin{equation}
\label{eq:risk}
    J_k = \textsc{lbgi}_k + r\,\textsc{hbgi}_k +  P_{\mathrm{treat},k}
\end{equation}
at each day $k$, where \ac{lbgi} and \ac{hbgi} are defined as
\begin{align}
    \textsc{lbgi}_k
    &= 68.31 \sum_{i=(k-1)N_s}^{kN_s-1}
    \mathbf{1}_{\{\textsc{bgi}_i < 90\}}\,\textsc{bgi}_i^2, \\
        \textsc{hbgi}_k
    &= 68.31 \sum_{i=(k-1)N_s}^{kN_s-1}
    \mathbf{1}_{\{\textsc{bgi}_i > 150\}}\,\textsc{bgi}_i^2. 
\end{align}
Here, $\mathbf{1}_{\{\cdot\}}\in\{0,1\}$ is the indicator function, and 
\begin{equation}
    \textsc{bgi}_i = \log \big( \mathrm{G}_i^{1.084} \big) - 5.381
\end{equation}
is computed from \ac{cgm} readings, which typically have a sampling period of $5$ minutes, yielding $N_s=288$ samples per day. The parameter $r$ allows for trade-off between hyperglycemia and hypoglycemia, with the baseline value $r=\frac{1}{3}$. Moreover, hypoglycemia treatments are penalized as
\begin{equation}
    P_{\mathrm{treat},k} = \left(\sum_{i=(k-1)N_s}^{kN_s-1} \mathrm{treat}_i\right)^{1.65},
\end{equation}
\end{subequations}
which discourages insulin overdoses that require administering rescue carbs in the future~\cite{moscoso2023reshaping}. 

% . Penalizing hypoglycemia treatments is crucial because they can mask \ac{lbgi}, leading to unsafe behavior if the hypoglycemia treatment pattern changes. This term therefore encourages minimally interventive closed-loop control

\textit{Analysis Surrogate:} We use a surrogate risk $\tilde{J}$, where step functions and event thresholds in~\eqref{eq:risk} are replaced by sigmoid approximations, while the exact clinical metric $J_k$ is retained for numerical evaluation. This is a standard relaxation in differentiable predictive control~\cite{cao2017extremum}. For notational brevity, we write $J_k$ throughout the analysis with the understanding that all gradient-based quantities refer to the surrogate $\tilde{J}_k$.

Our \ac{aid} system computes an insulin dose
\begin{equation}
\label{eq:ctrller}
    u=\kappa(x;\theta)= \gamma\beta u_\text{ss} + \gamma u_\text{\sc{bps}} + \gamma u_\text{\sc{umpc}}(\alpha,\beta),
\end{equation}
where the constant insulin $u_{ss}\in\reals_+$, the bolus insulin $u_\textsc{bps}\in\reals_+$ from the \ac{bps} controller, and the basal insulin $u_\textsc{umpc}\ge -u_{ss}$ from the \ac{umpc} controller are designed based on an average population (see~\cite{ shi2018adaptive}).

We exclusively tune $\theta = (\alpha, \beta) \in \mathbb{R}^2$ and leave the tuning of $\gamma$ to the existing method in~\cite{pryor2025pogo}. The parameter $\beta$ scales \ac{umpc}'s operating point, where the constant insulin $u_{op} = \gamma \beta u_{ss}$ drives the system toward a fasting glucose of $100\,\mathrm{mg/dL}$. The parameter $\alpha$ determines how conservatively \ac{umpc} handles insulin stacking. This distinction is clinically significant: recent \textit{in silico} and clinical studies show that switching to an ultra-rapid insulin formulation (e.g., Fiasp, AT247) yields minimal closed-loop improvement unless the controller is retuned to leverage the accelerated \ac{pkpd} profile~\cite{diaz2023maximizing}. The global scalar $\gamma$ cannot capture the altered temporal shape of the insulin action curve, since it scales the overall \ac{tdi} without distinguishing onset from tail. In contrast, tuning $\alpha$ allows the controller to safely reduce its conservativeness regarding insulin stacking, thus exploiting the faster clearance to curb postprandial peaks more aggressively while mitigating the risk of late-onset hypoglycemia. Concurrently, adjusting $\beta$ recalibrates the operating point for faster turnover. The following assumption underpins our convergence analysis.

\begin{assumption}[Contraction]
\label{ass:contraction}
The closed-loop system is contractive with rate $\lambda>0$. That is, there exists a Riemannian metric $M(x)$ such that, under any common input sequence $(\theta_t,d_t)\in\mathcal{T}\times\mathcal{D}$, any two trajectories $x(t)$ and $y(t)$, initialized in a forward-invariant set $\mathcal{C}$, satisfy
\[
d_M(x(t),y(t)) \le e^{-\lambda t} d_M(x(0),y(0))
\]
for all $t\ge 0$, where $d_M(\cdot,\cdot)$ denotes the distance induced by the metric $M(x)$.
\end{assumption}

Contraction does not imply disturbance rejection. Meals and physical activity can still cause large glucose excursions. Instead, contraction implies \emph{fading memory}, meaning that the long run response is dictated by the parameters and disturbance regardless of initial conditions. We use Assumption~\ref{ass:contraction} to show periodicity under periodic disturbance.

The intrinsic structure of \ac{t1d} physiology supports contractive behavior, as discussed in Remark~\ref{rem:structural}. Since modern \ac{umpc} and \ac{bps} controllers are designed to be stabilizing, it is plausible that the closed-loop system inherits contraction over physiologically relevant operating regimes. This assumption is further supported by the observed rapid decay of transient responses for the baseline model and some classes of metabolic variability~\cite{shi2018adaptive, pryor2025pogo,moscoso2023performance, villa2022model} because under mild technical assumptions, passivity imply contraction~\cite{kawano2024incremental}. 

\begin{remark}[Structural contractivity of T1D models]
\label{rem:structural}
In \ac{t1d} physiology, endogenous insulin secretion is absent. Thus, the open-loop dynamics decompose into a strict one-way cascade $\dot x_I = f_I(x_I,u,d)$ and $\dot x_G = f_G(x_G,x_I)$, where $x_I$ collects the insulin \ac{pkpd} and gut absorption states and $x_G$ the glucose compartments. The Jacobian is therefore block-triangular: $
\frac{\partial f}{\partial x}
=
\left[\begin{smallmatrix}
J_I(x_I) & 0\\
J_{GI}(x_G,x_I) & J_G(x_G,x_I)
\end{smallmatrix}\right]$.
One way to verify contractivity is that the Jacobian satisfies a Lyapunov inequality. There are physiological arguments supporting that each diagonal block is individually contracting;  for example, in Hovorka's model~\cite{hovorka2004nonlinear}, $J_I$ is contracting because \ac{pkpd} compartments exhibit strict mass dissipation, and $J_G$ because glucose is cleared by insulin-independent uptake and insulin-dependent utilization. When $J_I$ and $J_G$ exhibit contraction, by hierarchical contraction~\cite{bullo}, there exists a block-diagonal metric $M$ under which the full nonlinear system is contracting under the weighted norm induced by $M$, $\|x\|_M = \sqrt{x^\top M x}$, with rate $\lambda\approx\min(\lambda_I,\lambda_G)$. Similarly, Bergman's minimal model has a Jacobian $J = \left[\begin{smallmatrix} -(p_1+X) & -G & 0 \\ 0 & -p_2 & p_3 \\ 0 & 0 & -n \end{smallmatrix}\right]$, where $G$ is blood glucose, $X$ is insulin action, $p_1$ is glucose effectiveness, $p_2$ is the rate of insulin action decay, $p_3$ is \ac{is}, and $n$ is insulin clearance rate. Under $M=\mathrm{diag}(m_1,m_2,m_3)$, Sylvester's criterion on $MJ+J^\top M\prec 0$ requires $m_2/m_1 > G_{\max}^2/(4p_1 p_2)$ and $m_3/m_2 > p_3^2/(4p_2 n)$ which are physiologically admissible. The resulting contraction rate $\lambda \approx p_2 = 0.025$\,min$^{-1}$ yields a time constant of ${\sim}40$\,min, so transients wash out within $4$--$6$ hours, consistent with clinical observations~\cite{shi2018adaptive}.
\end{remark}

\begin{problem}
\label{prob:pas}
    Design an adaptation mechanism for $\theta_k\in\mathcal{T}$ to minimize the glycemic risk $J_k$ as $k\to\infty$ in the closed-loop architecture of Fig.~\ref{fig:block_diagram}.
\end{problem}
\begin{figure}[h]
\centering
\resizebox{\columnwidth}{!}{
\input{Block.tex}
}
\caption{Closed-loop fast control and slow adaptation.}
\label{fig:block_diagram}
\end{figure}
\section{Performance-based adaptation}

Our \ac{pas} adjusts $\theta_k\in\mathcal{T}\subseteq \reals^2$ by a projected real-time gradient descent
\begin{equation}
\label{eq:pas_update_main}
\theta_{k+1} = \Pi_{\mathcal{T}}\big( \theta_k - \eta \hat\nabla J_k + \xi_k \big),
\end{equation}
where $\Pi_{\mathcal{T}}(\cdot)$ is the Euclidean projection onto the admissible set $\mathcal{T}$, $\eta\in\reals_+$ is the learning rate, $\hat\nabla J_k\in\reals^2$ is an estimate of the surrogate risk gradient, and $\xi_k\in\reals^2$ is a bounded dither added for persistent exploration (e.g., periodic sign-flipping). Because $\mathcal{T}$ is convex, the projection is non-expansive ($\|\Pi_{\mathcal{T}}(x) - \Pi_{\mathcal{T}}(y)\| \le \|x-y\|$).

\subsection{Gradient estimation}
We employ a novel \ac{rwls} gradient estimator
\begin{subequations}
\label{eq:rwls}
    \begin{equation}
    \label{eq:rwls_main}
    \hat\nabla J_k
    =
    \big( \Delta \boldsymbol{\Theta}_k^\top W \Delta \boldsymbol{\Theta}_k+ \rho I \big)^{-1}
    \Delta \boldsymbol{\Theta}_k^\top W \Delta \mathbf{J}_k,
\end{equation}
where $W$ is a diagonal weighting matrix, $\rho$ is a regularization coefficient, and the data batches over a moving window of size $N$ are defined relative to the current operating point $\theta_k$,
\begin{align}
\Delta \mathbf{J}_k 
&= \big[ J_k - J_{k},\;  J_k - J_{k-1},\; \dots,\; J_k - J_{k-N} \big]^\top\!\!, \\
\Delta \boldsymbol{\Theta}_k 
&= \big[ \theta_k - \theta_{k},\; \theta_k - \theta_{k-1},\; \dots,\; \theta_k - \theta_{k-N} \big]^\top.
\end{align}
The entries $w_{i}, i\in\naturals_1^N,$ of the diagonal matrix $W$ are
\begin{equation}
\label{eq:weights}
   w_i=\frac{\hat w_i}{\sum_{j=1}^N \hat {w}_j}, \qquad \hat w_{i} = \frac{1}{\|\theta_k - \theta_{k-i}\|^3 + c^3},
\end{equation}
where $c > 0$ is the magnitude of dither. This weighting based on spatial proximity $\|\theta_k - \theta_{k-i}\|$ increases the contribution of samples $\{\theta_i,J_i\}$ that are closer to $\theta_k$, compensating for the higher-order terms in the Taylor expansion around the current point $\theta_k$. The parameter $\rho$ is an adaptive regularization,
\begin{equation}
\label{eq:rho}
    \rho = 0.1\left( \frac{\sum_{i=1}^N w_i \|\Delta\theta_i\|^4}{c^2} + c^2 \right),
\end{equation}
\end{subequations}
which tracks the weighted spatial variance of the data batch to make the step size adaptive to the estimation confidence; i.e., it prevents estimating a large gradient and taking large steps when measurements are distant from $\theta_k$~\cite{kashani2024robust, danielson2022extremum}.

\begin{lemma}[Bounded gradient estimation error]
\label{lem:estimator_robustness}
Let the steady-state surrogate risk $\bar J(\theta)$ be locally $L_J$-smooth within $\mathcal{T}$. Let the deterministic dither $\xi_k \in \reals^2$ be \ac{pe} such that $\Delta \boldsymbol{\Theta}_k^\top W \Delta \boldsymbol{\Theta}_k \succeq \sigma_0 I$ holds for some $\sigma_0 > 0$. Model the observed surrogate risk differences as $\Delta J_i = \nabla \bar J(\theta_k)^\top \Delta\theta_i + e_{\mathrm{curv},i} + e_{\mathrm{noise},i}$, where $|e_{\mathrm{curv},i}| \le \tfrac{L_J}{2}\|\Delta\theta_i\|^2$ and the residual day-to-day noise satisfies $|e_{\mathrm{noise},i}| \le \nu_{\max}$. Then, the estimation error $\delta_k := \hat\nabla J_k - \nabla\bar J(\theta_k)$ satisfies
\begin{equation}
\label{eq:delta_bound}
\|\delta_k\| \le \alpha \|\nabla \bar J(\theta_k)\| + \Delta_{\mathrm{est}},
\end{equation}
where $\Delta_{\mathrm{est}} := \frac{\sqrt{10}}{2c}\,\nu_{\max} + \frac{\sqrt{10}}{4}\, L_J\, c,$ and the relative bias factor satisfies $\alpha := \frac{\rho}{\sigma_0 + \rho} \in (0,1)$.
\end{lemma}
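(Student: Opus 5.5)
The plan is to write the estimator error in closed form and split it into a regularization-bias term and a residual term. Abbreviate $A := \Delta\boldsymbol{\Theta}_k^\top W \Delta\boldsymbol{\Theta}_k$, $g := \nabla\bar J(\theta_k)$, and let $e\in\reals^{N+1}$ collect $e_{\mathrm{curv},i}+e_{\mathrm{noise},i}$. The first row of the batch is identically zero, so only $i\in\naturals_1^N$ contributes, consistent with the indexing of $W$.

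\textbf{Error decomposition.} Substituting the model $\Delta\mathbf{J}_k = \Delta\boldsymbol{\Theta}_k g + e$ into \eqref{eq:rwls_main} gives $\hat\nabla J_k = (A+\rho I)^{-1}A g + (A+\rho I)^{-1}\Delta\boldsymbol{\Theta}_k^\top W e$. Using $(A+\rho I)^{-1}A - I = -\rho(A+\rho I)^{-1}$, we obtain
\[
\delta_k = -\rho (A+\rho I)^{-1} g + (A+\rho I)^{-1}\Delta\boldsymbol{\Theta}_k^\top W e .
\]

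\textbf{Bias term.} Since $A\succeq\sigma_0 I$ is symmetric, $\|(A+\rho I)^{-1}\| \le 1/(\sigma_0+\rho)$. Hence the first term is bounded by $\frac{\rho}{\sigma_0+\rho}\|g\|$, which is the $\alpha\|\nabla\bar J(\theta_k)\|$ term. Because $\sigma_0>0$ and $\rho>0$, it follows that $\alpha\in(0,1)$.

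\textbf{Residual term.} Factor it as $\big[(A+\rho I)^{-1}\Delta\boldsymbol{\Theta}_k^\top W^{1/2}\big]\,W^{1/2}e$.

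For the matrix factor, take the SVD of $B := W^{1/2}\Delta\boldsymbol{\Theta}_k$, so that $A = B^\top B$. Then $(B^\top B+\rho I)^{-1}B^\top$ has singular values $s/(s^2+\rho)\le 1/(2\sqrt\rho)$. This step is the key point: it uses only $\rho$, not $\sigma_0$, which is why $\Delta_{\mathrm{est}}$ does not involve the excitation level.

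For the vector factor, the normalization $\sum_i w_i=1$ and Minkowski's inequality in the weighted $\ell_2$ norm give
\[
\|W^{1/2}e\| \le \Big(\sum_i w_i e_{\mathrm{noise},i}^2\Big)^{1/2} + \Big(\sum_i w_i e_{\mathrm{curv},i}^2\Big)^{1/2} \le \nu_{\max} + \tfrac{L_J}{2}\sqrt{S_4},
\]
where $S_4 := \sum_i w_i\|\Delta\theta_i\|^4$.

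\textbf{Using the adaptive $\rho$.} The last step treats the two contributions with different lower bounds on $\rho$ taken from \eqref{eq:rho}.

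For the noise piece, $\rho\ge 0.1c^2$ gives $\nu_{\max}/(2\sqrt\rho)\le \frac{\sqrt{10}}{2c}\nu_{\max}$.

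For the curvature piece, $\rho\ge 0.1 S_4/c^2$ gives $\sqrt{S_4}/\sqrt\rho\le\sqrt{10}\,c$. Hence $\frac{L_J}{2}\sqrt{S_4}/(2\sqrt\rho)\le \frac{\sqrt{10}}{4}L_J c$.

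Summing the bias and residual bounds yields \eqref{eq:delta_bound}. I expect the only delicate step to be the choice of splitting: bound the residual via $1/(2\sqrt\rho)$ rather than $1/\sigma_0$, and apply each half of the definition of $\rho$ to the matching error component. Everything else is routine linear algebra. Local $L_J$-smoothness is used only to justify the stated curvature bound, which the lemma already assumes.
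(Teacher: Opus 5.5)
Your proposal is correct and follows the paper's proof essentially step for step. It uses the same split into a regularization bias $-\rho(A+\rho I)^{-1}\nabla\bar J(\theta_k)$ and a residual term, the same bounds $\|(A+\rho I)^{-1}\|\le(\sigma_0+\rho)^{-1}$ and $\|(A+\rho I)^{-1}\Delta\boldsymbol{\Theta}_k^\top W^{1/2}\|\le 1/(2\sqrt{\rho})$, the normalization $\sum_i w_i=1$, and $\rho\ge 0.1c^2$ for the noise piece. The only cosmetic difference is in the curvature piece: you invoke $\rho\ge 0.1\,S_4/c^2$ directly, whereas the paper substitutes the exact $\rho$ and bounds the factor $\sqrt{V/(V+c^4)}$ by $1$, which is the same estimate.
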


\begin{proof}
Substituting the observation model into~\eqref{eq:rwls_main} and letting $X=W^{1/2}\Delta\boldsymbol{\Theta}_k$ decomposes the error into bias and variance terms:
\begin{equation}
\begin{split}
    \delta_k =& \underbrace{-\rho(X^\top\! X+\rho I)^{-1}\nabla\bar J(\theta_k)}_{\text{bias}}
    + \\
    &\underbrace{(X^\top\! X+\rho I)^{-1}X^\top W^{1/2}(e_{\mathrm{curv}}+e_{\mathrm{noise}})}_{\text{variance}}.
\end{split}
\end{equation}

\emph{Bias term.} Because $X^\top X \succeq \sigma_0 I$, the spectral norm of the inverse satisfies $\|(X^\top X + \rho I)^{-1}\| \le (\sigma_0 + \rho)^{-1}$. Therefore, the bias magnitude is bounded by $\frac{\rho}{\sigma_0+\rho}\|\nabla\bar J\| = \alpha \|\nabla\bar J\|$.

\emph{Variance term.} Since $\|(X^\top\! X+\rho I)^{-1}X^\top\|\le 1/(2\sqrt\rho)$, the triangle inequality gives
\begin{equation}
\|\text{variance}\| \le \tfrac{1}{2\sqrt\rho}\bigl(\|W^{1/2}e_{\mathrm{noise}}\| + \|W^{1/2}e_{\mathrm{curv}}\|\bigr).
\end{equation}
Since $\sum w_i=1$, we have $\|W^{1/2}e_{\mathrm{noise}}\|^2 = \sum w_i e_{\mathrm{noise},i}^2 \le \nu_{\max}^2$, yielding $\|W^{1/2}e_{\mathrm{noise}}\|\le\nu_{\max}$. Because $\rho\ge 0.1c^2$ from~\eqref{eq:rho}, the noise amplification is bounded by $\frac{1}{2\sqrt{0.1c^2}}\nu_{\max}=\frac{\sqrt{10}}{2c}\,\nu_{\max}$.
By $L_J$-smoothness, $|e_{\mathrm{curv},i}|\le \frac{L_J}{2}\|\Delta\theta_i\|^2$. Let $V := \sum_{i=1}^N w_i\|\Delta\theta_i\|^4$ denote the weighted fourth spatial moment. Then $\|W^{1/2}e_{\mathrm{curv}}\|^2 \le \frac{L_J^2}{4}V$, yielding $\|W^{1/2}e_{\mathrm{curv}}\|\le \frac{L_J}{2}\sqrt{V}$. From~\eqref{eq:rho}, $\rho = 0.1(V/c^2 + c^2)$, so $\frac{1}{\sqrt\rho} = \frac{\sqrt{10}\,c}{\sqrt{V+c^4}}$. Thus:
\begin{equation}
\begin{split}
    \frac{1}{2\sqrt\rho}\|W^{1/2}e_{\mathrm{curv}}\| & \le \frac{\sqrt{10}\,c}{2\sqrt{V+c^4}} \cdot \frac{L_J}{2}\sqrt{V} \\
    & = \frac{\sqrt{10}}{4} L_J c \sqrt{\frac{V}{V+c^4}} \le \frac{\sqrt{10}}{4} L_J c.
\end{split}
\end{equation}
Summing the noise and curvature bounds yields $\Delta_{\mathrm{est}}$.
\end{proof}

\begin{remark}
\label{rem:weight_design}
By formulating the adaptive regularization~\eqref{eq:rho} using the weighted spatial variance of the data batch, the estimator acts as a dynamically scaled trust region. This yields an error bound~\eqref{eq:delta_bound} that recovers the fundamental trade-off: a larger exploration dither $c$ attenuates noise ($\propto 1/c$) but amplifies curvature bias ($\propto c$).
\end{remark}

\subsection{From contraction to steady-state optimization}
\label{sec:periodic}
In this section, we show that \ac{pas} drives the parameter to a neighborhood of the optimum under some regularity assumptions, while preserving the closed-loop \ac{iss} property. The \ac{iss} property is important because the gradient descent update~\eqref{eq:pas_update_main} acts as an integral action, which can destabilize the system. We establish this result in two steps. First, under a daily routine with periodic disturbances and exact gradient estimation, we show that the system converges to the optimum. Then, in a more realistic setting with random disturbances, we show that the system converges to a neighborhood of the optimum. 

\begin{lemma}[Theorem~$3.15$ from~\cite{bullo}]
\label{lemma:periodic}
Let Assumption~\ref{ass:contraction} hold. Let $\theta_k=\bar\theta\in \mathcal{T}$ be constant. Let $d_t$ be periodic with period $T$. Then, there exists a unique $T$-periodic solution $x^\star_t(\bar\theta)\in\mathcal{C}, \forall t$ to the closed-loop system such that for any initial condition $x_0\in\mathcal{C}$, the corresponding trajectory satisfies
\begin{equation}
d_M(x_t,x^\star_t(\bar\theta))\le e^{-\lambda(t-t_0)}d_M(x_0,x^\star_{0}(\bar\theta)),
\end{equation}
for all $t\ge t_0$, where $\lambda>0$ is the contraction rate.
\end{lemma}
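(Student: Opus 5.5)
The plan is the standard Poincaré (period) map argument for contracting systems with periodic forcing. With $\theta_k=\bar\theta$ fixed and $d_t$ being $T$-periodic, the closed-loop vector field $F(t,x)=f(x,\kappa(x;\bar\theta),d_t)$ is $T$-periodic in $t$. Let $\phi_t(x_0)$ denote the solution at time $t$ from $x_0\in\mathcal{C}$ at time $0$. Define the period map $P:\mathcal{C}\to\mathcal{C}$ by $P(x_0)=\phi_T(x_0)$; it maps $\mathcal{C}$ into itself by forward invariance. Periodicity of $F$ gives the semigroup property $\phi_{(m+1)T}=P\circ\phi_{mT}$, so $\phi_{mT}=P^m$.

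\emph{Step 1: a fixed point of $P$.} Apply Assumption~\ref{ass:contraction} to the common input $(\bar\theta,d_t)$ with initial conditions $x_0,y_0\in\mathcal{C}$. This gives $d_M(P(x_0),P(y_0))\le e^{-\lambda T}d_M(x_0,y_0)$, so $P$ is a strict contraction in the Riemannian distance $d_M$. To invoke the Banach fixed-point theorem, I need $(\mathcal{C},d_M)$ to be complete. I will take $\mathcal{C}$ to be closed, and $M(x)$ uniformly bounded above and below, $\underline m I\preceq M(x)\preceq \overline m I$. Then $d_M$ is equivalent to the Euclidean distance, so closed subsets of $\reals^n$ are complete. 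This completeness requirement is the main point that needs care, since the assumption as stated only posits a metric. I expect it to be implicit in the standard setting of Theorem~3.15 in~\cite{bullo}, so I would state it explicitly as part of the hypotheses. Banach then yields a unique fixed point $x^\star_0(\bar\theta)\in\mathcal{C}$.

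\emph{Step 2: periodicity and uniqueness.} Define $x^\star_t(\bar\theta)=\phi_t(x^\star_0)$. Then $x^\star_{t+T}=\phi_t(\phi_T(x^\star_0))=\phi_t(x^\star_0)=x^\star_t$, using time-periodicity of $F$. Hence the solution is $T$-periodic, and it lies in $\mathcal{C}$ for all $t$ by invariance. For uniqueness, any $T$-periodic solution in $\mathcal{C}$ has an initial state that is a fixed point of $P$, so it coincides with $x^\star$.

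\emph{Step 3: convergence estimate.} The periodic orbit is itself a trajectory driven by the same input $(\bar\theta,d_t)$. Applying Assumption~\ref{ass:contraction} to the pair $(x_t,x^\star_t)$, starting at $t_0$, gives $d_M(x_t,x^\star_t)\le e^{-\lambda(t-t_0)}d_M(x_{t_0},x^\star_{t_0})$. The assumption is time-invariant with respect to shifts of the input, and the shifted disturbance is still admissible, so the estimate holds from any starting time. Taking $t_0=0$ recovers the stated form. Beyond the completeness issue in Step 1, everything else is routine.
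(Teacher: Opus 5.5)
Your argument is correct. The paper gives no proof of its own and simply cites Theorem~3.15 of Bullo's monograph; your period-map argument (the map $P$ is a contraction with factor $e^{-\lambda T}$, Banach gives its fixed point, and the convergence estimate is the contraction inequality applied against the periodic orbit) is the standard proof of that periodic-entrainment result. You were also right to state the completeness hypothesis explicitly (closed $\mathcal{C}$ and $\underline m I\preceq M(x)\preceq \overline m I$, so $d_M$ is bi-Lipschitz to the Euclidean distance): Assumption~\ref{ass:contraction} leaves it implicit, yet Banach's theorem needs it when working in the Riemannian distance $d_M$.
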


Lemma~\ref{lemma:periodic} states that when daily disturbances repeat, so does the glucose response, even though individual meals and activity bouts still cause large transient excursions. The key requirement is periodicity of the disturbance pattern, not its magnitude. Clinicians implicitly rely on this entrainment when designing periodic \ac{mpc} for \ac{aid}~\cite{gondhalekar2016periodic}. However, updating $\theta_k$ changes the periodic orbit $x^\star_t(\bar\theta)$. The following theorem shows that there is a limit on how fast $\theta_k$ can be updated. 
% \begin{proposition}
% \label{prop:cost-converge}
% Let the assumptions of Lemma~\ref{lemma:periodic} hold with contraction rate $\lambda>0$ and period $T$ of one day. Let the hypoglycemia treatment $P_{\mathrm{treat},k}(g)$ be a Lipschitz function of \ac{cgm} measurement $G$. Let \ac{cgm} data be noise-free $\nu=0$. Then the risk metric $J_k$ converges exponentially to a constant $\bar J$, i.e., $|J_k - \bar J|\le ce^{-\lambda k},$ for some constant $c>0$.
% \end{proposition}
% \begin{proof}
% By Lemma~\ref{lemma:periodic}, the glucose trajectory $g(x_t)$ of the system~\eqref{eq:dynamics} converges exponentially to a unique daily periodic orbit $g^\star_t = g(x^\star_t)$ with rate $\lambda>0$.  
% Since $\text{\sc{bgi}}_t$ and $P_{\mathrm{treat},k}$ are Lipschitz functions of glucose, they also converge exponentially to their respective periodic trajectories. Similarly, finite sums and linear combinations in $\text{\sc{lbgi}}_k$ and $\text{\sc{hbgi}}_k$ are Lipschitz continuous and thus preserve exponential convergence. Hence, $J_k$ converges exponentially to a periodic trajectory $\bar J_k = \sum_{t=(k-1)T}^{kT-1} \phi(g^\star_{t})$ for some $\phi$, i.e., there exists $c>0$ such that
% \begin{subequations}
% \begin{equation}
% |J_k - \bar J_k| \le c e^{-\lambda k}.
% \end{equation}
% Finally, since $g^\star_t$ is $T$-periodic, 
% \begin{equation}
%     \bar J_{k+1} = \sum_{t=kT}^{(k+1)T-1} \phi(g^\star_{t})
%     = \sum_{t=(k-1)T}^{kT-1} \phi(g^\star_{t+T})
%     = \bar J_k
% \end{equation} 
% \end{subequations}
% which shows that $\bar J_k=\bar J$ is constant.
% \end{proof}
\begin{theorem}[Convergence to optimum]
\label{thm:joint_stability}
Let Assumption~\ref{ass:contraction} hold. Let the disturbance \(d_t\) be \(T\)-periodic. Assume:
\begin{enumerate}
\item The periodic orbit $x_t^\star(\theta)$ is Lipschitz in $\Theta$, implying $\chi^2_k
\le
L_{x^\star}^2\|\theta_1-\theta_2\|^2$ where 
\begin{equation}
\label{eq:tilde_def}
\chi^2_k
:=
\frac1T\int_{(k-1)T}^{kT}d_M(x_t,x_t^\star(\theta_k))^2dt,
\end{equation}
\item The steady-state risk \(\bar J(\theta)\) is \(\mu\)-strongly convex on \(\Theta\), admits a unique minimizer \(\theta^\star\in\Theta\), and is \(L_J\)-smooth.
\item The adaptation law~\eqref{eq:pas_update_main} is dither-free and uses the exact gradient of the steady-state risk \(\hat\nabla J_k=\nabla \bar J(\theta_k)\).
\end{enumerate}
Then, the closed-loop system converges exponentially to the periodic orbit \((x_t^\star(\theta^\star),\theta^\star)\) if, for any $p>0$, $\eta$ satisfies
\begin{equation}
\label{eq:eta_final}
0<\eta<
\frac{2p\mu}{
L_J^2\left(\left(1+\frac{2}{e^{2\lambda T}-1}\right)L_{x^\star}^2+p\right)
}.
\end{equation}
\end{theorem}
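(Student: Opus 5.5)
The proof will use a composite Lyapunov function that couples the slow parameter error with the fast tracking error. Set $e_k:=\theta_k-\theta^\star$. Along with $\chi_k^2$ from~\eqref{eq:tilde_def}, define
\[
V_k := \|e_k\|^2 + p^{-1}\,\kappa\,\chi_k^2,
\]
where $\kappa>0$ is a constant chosen below. The plan has three steps. First, establish a one-step contraction for each piece separately. Second, combine the two through a small-gain argument, which will produce the admissible range~\eqref{eq:eta_final} for $\eta$. Third, show that $V_k\to 0$ geometrically. Once $V_k\to0$, Lemma~\ref{lemma:periodic} and Lipschitz continuity of $x^\star$ imply that $d_M(x_t,x_t^\star(\theta^\star))\to 0$ exponentially.

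\emph{Parameter step.} Under assumption~3, $\theta^\star=\Pi_{\mathcal T}(\theta^\star-\eta\nabla\bar J(\theta^\star))$ by optimality of $\theta^\star$. Non-expansiveness of the projection then gives
\[
\|e_{k+1}\|^2\le \|e_k-\eta(\nabla\bar J(\theta_k)-\nabla\bar J(\theta^\star))\|^2 .
\]
Strong convexity and $L_J$-smoothness bound the right-hand side by $(1-2\eta\mu+\eta^2L_J^2)\|e_k\|^2$. I will keep the cross term $2\eta\mu$ and the quadratic term $\eta^2L_J^2$ separate, because the quadratic term must absorb the coupling. With exact steady-state gradients the parameter recursion is autonomous. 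The coupling to $\chi_k$ therefore enters only through the orbit shift $\|\theta_{k+1}-\theta_k\|\le \eta\|\nabla\bar J(\theta_k)\|\le \eta L_J\|e_k\|$.

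\emph{Tracking step.} Over day $k+1$ the parameter is frozen, so Assumption~\ref{ass:contraction} applies on $[kT,(k+1)T]$. By the triangle inequality,
\[
d_M(x_{kT},x^\star_{kT}(\theta_{k+1}))\le d_M(x_{kT},x^\star_{kT}(\theta_k)) + L_{x^\star}\|\theta_{k+1}-\theta_k\|.
\]
The first term is the end-of-day tracking error, which contraction has already shrunk by $e^{-\lambda T}$. Integrating $e^{-2\lambda t}$ over a period produces the factor $\frac{1-e^{-2\lambda T}}{2\lambda T}$. Combining this with Young's inequality should give a recursion of the form
\[
\chi_{k+1}^2\le a\,\chi_k^2 + b\,L_{x^\star}^2\eta^2L_J^2\|e_k\|^2,
\]
with $a<1$ depending on $e^{-2\lambda T}$. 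The weight $b$ should come out as $1+\frac{2}{e^{2\lambda T}-1}=\frac{e^{2\lambda T}+1}{e^{2\lambda T}-1}$, which is the geometric-series sum of the contracted mismatch across days. I will choose the Young split and $\kappa$ so that exactly this constant appears.

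\emph{Combination and main obstacle.} Adding the two recursions with weight $1/p$ on the tracking part gives
\[
V_{k+1}\le \Bigl(1-2\eta\mu+\eta^2L_J^2\bigl(1+p^{-1}bL_{x^\star}^2\bigr)\Bigr)\|e_k\|^2 + a\,p^{-1}\kappa\chi_k^2 .
\]
The factor in front of $\|e_k\|^2$ is less than $1$ precisely when $\eta<2p\mu/\bigl(L_J^2(bL_{x^\star}^2+p)\bigr)$, which is~\eqref{eq:eta_final}. The main obstacle is the bookkeeping between the continuous-time $\chi_k$ and the endpoint distances. One has to relate $\chi_k$ to $d_M$ at $t=kT$, using $d_M(x_{kT},\cdot)\le e^{-\lambda T}d_M(x_{(k-1)T},\cdot)$ and the lower bound of the integral average, without losing the constant. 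If that conversion is awkward, I will instead run the Lyapunov argument on endpoint distances $d_M(x_{kT},x^\star_{kT}(\theta_k))$ and recover $\chi_k$ afterwards by averaging.
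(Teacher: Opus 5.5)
Your argument has the same architecture as the paper's. The paper's Lyapunov function $V_k=\chi_k^2+p\|\tilde\theta_k\|^2$ is $p$ times yours with $\kappa=1$, and the parameter factor is the same $q=1-2\eta\mu+\eta^2L_J^2$. The Young split that produces your $b$ is the paper's $\epsilon=\tfrac12(e^{2\lambda T}-1)$. It gives $a=\tfrac12(1+e^{-2\lambda T})$ and $b=1+1/\epsilon$, so $b$ is just the Young constant, not the geometric sum $\sum_j e^{-2\lambda Tj}$. The resulting condition is again \eqref{eq:eta_final}.

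The genuine difference is how the tracking recursion is obtained. The paper splits $d_M(x_t,x_t^\star(\theta_{k+1}))$ pointwise inside the day-$(k+1)$ integral. It then uses $\frac1T\int_{kT}^{(k+1)T}d_M(x_t,x_t^\star(\theta_k))^2\,dt\le e^{-2\lambda T}\chi_k^2$. Assumption~\ref{ass:contraction} does not directly give this, because on that day $x_t$ is driven by $\theta_{k+1}$ while $x_t^\star(\theta_k)$ is driven by $\theta_k$.

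You split at the day boundary and then contract under the frozen $\theta_{k+1}$, which avoids this issue. The obstacle you flag also resolves without your fallback. Since $t\mapsto d_M(x_t,x_t^\star(\theta_k))$ is nonincreasing on day $k$ (apply contraction between any two times of that day), you get $d_M(x_{kT},x_{kT}^\star(\theta_k))^2\le\frac{2\lambda T}{e^{2\lambda T}-1}\chi_k^2$. Multiplying by your factor $\frac{1-e^{-2\lambda T}}{2\lambda T}$ gives exactly $e^{-2\lambda T}$, so no constant is lost. Hence
$\chi_{k+1}^2\le(1+\epsilon)e^{-2\lambda T}\chi_k^2+\frac{1-e^{-2\lambda T}}{2\lambda T}\bigl(1+\tfrac1\epsilon\bigr)L_{x^\star}^2\|\theta_{k+1}-\theta_k\|^2$.
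This is the paper's \eqref{eq:x_bound_young} with an extra factor below one on the coupling term, so \eqref{eq:eta_final} suffices, slightly conservatively. Your route therefore gives a rigorous proof of the one step the paper only asserts, at the cost of a few extra lines.

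One caveat you share with the paper concerns the bound $\|\theta_{k+1}-\theta_k\|\le\eta\|\nabla\bar J(\theta_k)\|\le\eta L_J\|e_k\|$. It needs $\nabla\bar J(\theta^\star)=0$, i.e.\ $\theta^\star\in\operatorname{int}(\mathcal{T})$. Your projection fixed-point argument handles a boundary minimizer in the parameter step. For a boundary minimizer, however, this orbit-shift bound can fail. The paper assumes interiority explicitly only in Corollary~\ref{cor:joint_stability_transient}.
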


\begin{proof}
Lemma~\ref{lemma:periodic} implies $
\chi^2_{k+1}
\le
e^{-2\lambda T}\chi^2_k$ when $\theta_{k+1}=\theta_k$. However, the parameter update \(\theta_{k+1}=\theta_k-\eta\nabla \bar J(\theta_k)\) changes the reference orbit. We expand
\begin{equation}
    d_M(x_t,x_t^\star(\theta_{k+1}))=d_M(x_t,x_t^\star(\theta_{k})+x_t^\star(\theta_{k})-x_t^\star(\theta_{k+1})),
\end{equation} 
and use Lipschitz continuity of \(x_t^\star(\theta)\) and Young's inequality $\|a+b\|^2 \le (1+\epsilon)\|a\|^2 + (1+\frac{1}{\epsilon})\|b\|^2$ with \(\epsilon>0\) to get
\begin{equation}
\label{eq:x_bound_young}
\chi^2_{k+1}
\le
(1+\epsilon)e^{-2\lambda T}\chi^2_k
+
\left(1+\frac{1}{\epsilon}\right)L_{x^\star}^2\|\theta_{k+1}-\theta_k\|^2.
\end{equation}
Applying the update~\eqref{eq:pas_update_main} on \eqref{eq:x_bound_young}, we have
\begin{equation}
\label{eq:x_bound_final}
\chi^2_{k+1}
\le
(1+\epsilon)e^{-2\lambda T}\chi^2_k
+
\left(1+\frac{1}{\epsilon}\right)L_{x^\star}^2\eta^2L_J^2\|\tilde{\theta}_k\|^2,
\end{equation}
where $\tilde{\theta}_k:=\theta_k-\theta^\star$ and nonexpansiveness of $\Pi_\mathcal{T}$ and \(L_J\)-smoothness of $J$ is leveraged. Further, by $\mu$-strong convexity and \(L_J\)-smoothness, for some \(0<\eta<\frac{2\mu}{L_J^2}\), we have
\begin{equation}
\label{eq:theta_contraction}
\|\tilde{\theta}_{k+1}\|^2
\le
\bigl(1-2\eta\mu+\eta^2L_J^2\bigr)\|\tilde{\theta}_k\|^2
=:q\|\tilde{\theta}_k\|^2,
\end{equation}
where \(q\in(0,1)\). Now consider the Lyapunov function
\begin{equation}
\label{eq:lyapunov}
V_k=\chi^2_k+p\|\tilde{\theta}_k\|^2.
\end{equation}
Combining \eqref{eq:x_bound_final} and \eqref{eq:theta_contraction} gives
\begin{equation}
\label{eq:V_bound}
V_{k+1}
\le
(1+\epsilon)e^{-2\lambda T}\chi^2_k
+
\left[
\left(1+\frac{1}{\epsilon}\right)L_{x^\star}^2\eta^2L_J^2
+
pq
\right]\|\tilde{\theta}_k\|^2.
\end{equation}
Let us define the following rates
\begin{equation}
\label{eq:sigma_def}
\sigma_x := (1+\epsilon)e^{-2\lambda T},
\>
\sigma_\theta := q+\frac{1}{p}\left(1+\frac{1}{\epsilon}\right)L_{x^\star}^2\eta^2L_J^2.
\end{equation}
Then from \eqref{eq:V_bound} we have $V_{k+1}\le \sigma_x \chi^2_k+\sigma_\theta\,p\|\tilde{\theta}_k\|^2.$
Hence, we obtain contraction $V_{k+1}\le \sigma V_k$ with $\sigma:=\max\{\sigma_x,\sigma_\theta\}\in(0,1)$ when \(\sigma_x<1\) and \(\sigma_\theta<1\). To ensure \(\sigma_x<1\), we choose
$\epsilon = \frac{1}{2}(e^{2\lambda T}-1)$. Likewise, \(\sigma_\theta<1\) holds when $\left(1+\frac{1}{\epsilon}\right)L_{x^\star}^2\eta^2L_J^2
<
p(1-q).$
Since $1-q=2\eta\mu-\eta^2L_J^2,$ the \(\sigma_x<1\) holds when learning rate $\eta$ satisfies~\eqref{eq:eta_final}. Therefore, \(V_k\) is decreasing which supports exponential convergence \((x_t, \theta_k)\to(x_t^\star(\theta^\star), \theta^\star)\).
\end{proof}

Theorem~\ref{thm:joint_stability} derives a bound~\eqref{eq:eta_final} on the learning rate $\eta$ that preserves the closed-loop \ac{iss} property and ensures convergence to the optimum under periodic disturbances and exact gradient estimation. The following corollary promotes closed-loop \ac{iss} property and convergence to a neighborhood of optimum when the estimator~\eqref{eq:rwls_main} is used and the system is subject to noise and changes in the disturbance pattern.

\begin{corollary}[Convergence under estimated gradient]
\label{cor:joint_stability_transient}
Let the assumptions of Theorem~\ref{thm:joint_stability} hold. Assume the update law~\eqref{eq:pas_update_main} uses the \ac{rwls} estimator with a bounded dither $\xi_k$ satisfying the \ac{pe} condition of Lemma~\ref{lem:estimator_robustness}. Assume the gradient evaluated at the state trajectory satisfies a Lipschitz bound with respect to the steady-state orbit:
\begin{equation}
\label{eq:gradient_mismatch}
\|\nabla_\theta \tilde{J}_k(x,\theta_k)-\nabla_\theta \bar J(x^\star(\theta_k), \theta_k)\| \le L_g |\chi_k|
\end{equation}
where, with abuse of notation, $x$ and $x^\star$ are daily trajectories.
Provided the \ac{pe} guarantees, the relative estimator bias satisfies $\alpha < \frac{\mu}{\sqrt{6} L_J}$, there exists a sufficiently small learning rate \(\eta>0\) such that the closed-loop system is \ac{iss}, converging to a neighborhood of \((x_t^\star(\theta^\star),\theta^\star)\) within $\reals^n\times\mathcal{T}$.
\end{corollary}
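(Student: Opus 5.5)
We reuse the Lyapunov function $V_k=\chi_k^2+p\|\tilde\theta_k\|^2$ from the proof of Theorem~\ref{thm:joint_stability}. The aim is to show a perturbed contraction of the form $V_{k+1}\le\sigma V_k+\gamma_1\Delta_{\mathrm{est}}^2+\gamma_2\|\xi_k\|^2+\gamma_3\|w_k\|^2$ with $\sigma<1$. Here $w_k$ collects the deviations of the disturbance from its nominal periodic pattern, which enter the $\chi$-dynamics through the contraction of Assumption~\ref{ass:contraction} as an additive input. A discrete-time ISS-Lyapunov argument then gives ISS with respect to $(\Delta_{\mathrm{est}},\xi,w)$. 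It also gives convergence to a ball around $(x^\star_t(\theta^\star),\theta^\star)$ whose radius scales with $\Delta_{\mathrm{est}}$, $\sup\|\xi_k\|$ and $\sup\|w_k\|$.

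\textbf{Step 1: decompose the update direction.} The applied gradient is split as
\[
\hat\nabla J_k=\nabla\bar J(\theta_k)+e_k+\delta_k ,
\]
where $e_k$ is the trajectory mismatch and $\delta_k$ is the estimation error. By \eqref{eq:gradient_mismatch}, $\|e_k\|\le L_g|\chi_k|$. We apply Lemma~\ref{lem:estimator_robustness} to the observed differences, with $e_{\mathrm{noise}}$ absorbing the mismatch and CGM noise, to get $\|\delta_k\|\le\alpha\|\nabla\bar J(\theta_k)\|+\Delta_{\mathrm{est}}$. Smoothness gives $\|\nabla\bar J(\theta_k)\|\le L_J\|\tilde\theta_k\|$.

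\textbf{Step 2: parameter recursion.} Nonexpansiveness of $\Pi_{\mathcal T}$ and $\theta^\star=\Pi_{\mathcal T}(\theta^\star-\eta\nabla\bar J(\theta^\star))$ give
\[
\|\tilde\theta_{k+1}\|\le\|\tilde\theta_k-\eta\nabla\bar J(\theta_k)\|+\eta\alpha L_J\|\tilde\theta_k\|+\eta L_g|\chi_k|+\eta\Delta_{\mathrm{est}}+\|\xi_k\|.
\]
We square this using $(a_1+\dots+a_m)^2\le m\sum a_i^2$, or a weighted Young split that isolates the first term with factor $(1+\epsilon')$. Combined with \eqref{eq:theta_contraction}, this yields
\[
\|\tilde\theta_{k+1}\|^2\le\bigl(q(1+\epsilon')+c\,\eta^2\alpha^2L_J^2\bigr)\|\tilde\theta_k\|^2+c\,\eta^2L_g^2\chi_k^2+\text{(inputs)} .
\]
Since $q=1-2\eta\mu+\eta^2L_J^2$, the net $O(\eta)$ coefficient is $-2\eta\mu$ plus terms of order $\eta\alpha L_J$ after the cross terms are optimized. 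The condition $\alpha<\mu/(\sqrt6L_J)$ is what makes the first-order coefficient stay strictly negative, of the form $-\eta(2\mu-\text{const}\cdot\alpha L_J)$ up to $O(\eta^2)$. The constant $6$ should emerge from the specific Young splitting. I would try three-way splits, e.g.\ factors $(1+\epsilon)$ with $\epsilon\sim\eta\mu$, and tune to reproduce $\sqrt6$.

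\textbf{Step 3: state recursion and combination.} As in \eqref{eq:x_bound_young}, we get
\[
\chi_{k+1}^2\le\sigma_x\chi_k^2+(1+1/\epsilon)L_{x^\star}^2\|\theta_{k+1}-\theta_k\|^2+\gamma_3\|w_k\|^2 .
\]
The step $\|\theta_{k+1}-\theta_k\|^2$ is bounded by $O\bigl(\eta^2(L_J^2(1+\alpha)^2\|\tilde\theta_k\|^2+L_g^2\chi_k^2+\Delta_{\mathrm{est}}^2)+\|\xi_k\|^2\bigr)$. Adding $p$ times Step~2, the $\chi_k^2$ coefficient becomes $\sigma_x+O(\eta^2)(L_{x^\star}^2+p)L_g^2$. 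The $\|\tilde\theta_k\|^2$ coefficient becomes $p\bigl(1-\eta\kappa+O(\eta^2)\bigr)$ with $\kappa>0$ from Step~2. Because $\sigma_x<1$ is independent of $\eta$, and the cross-coupling $\eta^2L_g^2$ is second order, a sufficiently small $\eta$ makes both coefficients less than one. This gives the required $V_{k+1}\le\sigma V_k+\text{inputs}$, and iterating it yields the ISS estimate and the ultimate bound.

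\textbf{Main obstacle.} The delicate step is Step~2: the biased estimator error is proportional to $\|\tilde\theta_k\|$ with a first-order-in-$\eta$ coefficient. It therefore competes directly with the $-2\eta\mu$ descent term, unlike the higher-order coupling terms. Obtaining exactly the threshold $\alpha<\mu/(\sqrt6L_J)$ requires choosing the squaring and Young constants carefully. If reproducing $\sqrt6$ fails, I would settle for an explicit sufficient condition of the same form.
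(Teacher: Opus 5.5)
Your proposal is correct and follows the paper's proof: the same Lyapunov function $V_k=\chi_k^2+p\|\tilde\theta_k\|^2$, the triangle-inequality bound $\alpha L_J\|\tilde\theta_k\|+L_g|\chi_k|+\Delta_{\mathrm{est}}$ on the gradient error, nonexpansiveness of $\Pi_{\mathcal T}$ with a Young split at $\epsilon=\eta\mu$, and a small-$\eta$ combination with \eqref{eq:x_bound_young}. Your extra input $w_k$ is not covered by Assumption~\ref{ass:contraction}, which only contracts trajectories under common inputs, but under the inherited periodic-disturbance hypothesis you can simply set $w_k\equiv 0$.

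You do not need to reproduce $\sqrt6$. The paper gets it from $\|\delta_k\|^2\le 3(\cdot)$ and $\|{-\eta\delta_k}+\xi_k\|^2\le 2\eta^2\|\delta_k\|^2+2\|\xi_k\|^2$, which with $\epsilon=\eta\mu$ gives the linear coefficient $-\eta(\mu-6\alpha^2L_J^2/\mu)$. Your four-term split instead needs only $\alpha<\mu/(2L_J)$, or even $\alpha<\mu/L_J$ if you merge $\eta\alpha L_J\|\tilde\theta_k\|$ with $\sqrt q\,\|\tilde\theta_k\|$ before squaring, and the stated hypothesis already implies this.
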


\begin{proof}
Let $\delta_k := \hat{\nabla} \tilde{J}_k - \nabla \bar J(\theta_k)$. Applying the triangle inequality to~\eqref{eq:gradient_mismatch} and Lemma~\ref{lem:estimator_robustness} yields:
\begin{equation}
\|\delta_k\| \le \alpha\|\nabla \bar J(\theta_k)\| + L_g |\chi_k| + \Delta_{\mathrm{est}}.
\end{equation}
Because $\theta^\star \in \operatorname{int}(\mathcal{T})$ and $\bar J$ is locally $L_J$-smooth, $\nabla \bar J(\theta^\star) = 0$ and thus $\|\nabla \bar J(\theta_k)\| \le L_J \|\tilde{\theta}_k\|$. Using the inequality $(a+b+c)^2 \le 3a^2 + 3b^2 + 3c^2$, the mismatch is bounded by:
\begin{equation}
\label{eq:delta_k_bound}
\|\delta_k\|^2 \le 3\alpha^2 L_J^2 \|\tilde{\theta}_k\|^2 + 3L_g^2 \chi^2_k + 3\Delta_{\mathrm{est}}^2.
\end{equation}

By the non-expansive property of the projection operator $\Pi_{\mathcal{T}}$, the distance to the optimum $\theta^\star \in \mathcal{T}$ satisfies $
\|\tilde{\theta}_{k+1}\|^2 \le \|\tilde{\theta}_k - \eta \nabla \bar J(\theta_k) - \eta \delta_k + \xi_k\|^2$.
Applying Young's inequality $\|a+b\|^2 \le (1+\epsilon)\|a\|^2 + (1+\frac{1}{\epsilon})\|b\|^2$ with $a = \tilde{\theta}_k - \eta \nabla \bar J(\theta_k)$ and $b = -\eta \delta_k + \xi_k$ yields:
\begin{align}
\|\tilde{\theta}_{k+1}\|^2 
\le & (1+\epsilon)\|\tilde{\theta}_k - \eta \nabla \bar J(\theta_k)\|^2 \notag \\
&
+ \left(1+\frac{1}{\epsilon}\right)\bigl( 2\eta^2\|\delta_k\|^2 + 2\|\xi_k\|^2 \bigr)
\end{align}
for any $\epsilon > 0$. By local $\mu$-strong convexity and $L_J$-smoothness, $\|\tilde{\theta}_k - \eta \nabla \bar J(\theta_k)\|^2 \le (1 - 2\eta\mu + \eta^2 L_J^2)\|\tilde{\theta}_k\|^2$. We select $\epsilon = \eta\mu$. Substituting~\eqref{eq:delta_k_bound} into the parameter bound, the coefficient of $\|\tilde{\theta}_k\|^2$ becomes:
$\rho_\theta(\eta) = (1+\eta\mu)\left(1 - 2\eta\mu + \eta^2 L_J^2\right) + \left(1+\frac{1}{\eta\mu}\right) 6\eta^2 \alpha^2 L_J^2.$
Expanding this algebraically:
$\rho_\theta(\eta) = 1 - \eta\mu - 2\eta^2\mu^2 + \eta^2 L_J^2 + \eta^3\mu L_J^2 + 6\eta^2 \alpha^2 L_J^2 + \frac{6\eta\alpha^2 L_J^2}{\mu}$.
For the parameter tracking error to strictly contract ($\rho_\theta < 1$), we group the linear terms in $\eta$:
\begin{equation}
\rho_\theta(\eta) = 1 - \eta\left( \mu - \frac{6\alpha^2 L_J^2}{\mu} \right) + \mathcal{O}(\eta^2).
\end{equation}
For this to be strictly less than $1$ for sufficiently small $\eta > 0$, the linear coefficient of $\eta$ must be strictly negative. This rigidly requires $\mu - \frac{6\alpha^2 L_J^2}{\mu} > 0$, which mathematically simplifies to $\alpha < \frac{\mu}{\sqrt{6} L_J}$. Provided the dither provides sufficient PE to satisfy this bound, we can select $\eta$ small enough such that the strictly negative linear term dominates the quadratic terms, ensuring there exists $\bar{q} \in (0,1)$ such that $\|\tilde{\theta}_{k+1}\|^2 \le \bar q\|\tilde{\theta}_k\|^2 + \bar c_x\eta^2\chi^2_k + \bar c_{\Delta}\eta^2\Delta_{\mathrm{est}}^2 + c_\xi\|\xi_k\|^2$. 

Concurrently, by non-expansiveness of $\Pi_{\mathcal{T}}$, the step size satisfies $\|\theta_{k+1}-\theta_k\| \le \eta\|\nabla\bar J(\theta_k)\| + \eta\|\delta_k\| + \|\xi_k\|$. Expanding the state bound~\eqref{eq:x_bound_final} with this projected-step bound yields a similar affine bound in $\chi^2_k$, $\|\tilde{\theta}_k\|^2$, $\Delta_{\mathrm{est}}^2$, and $\|\xi_k\|^2$. Combining these into the Lyapunov function $V_k=\chi^2_k+p\|\tilde{\theta}_k\|^2$, there exists $\sigma \in (0,1)$ such that $V_{k+1} \le \sigma V_k + \mathcal{O}(\eta^2 \Delta_{\mathrm{est}}^2 + \|\xi_k\|^2)$. Thus the closed-loop system is \ac{iss}. The parameters and physiological states converge to an ultimate bounded neighborhood dictated by the persistent exploration dither and residual metabolic noise, localized within $\mathcal{T}$ by the projection operator.
\end{proof}
\section{\textit{In silico} Trial}
We use the latest version of the UVA/Padova simulator to evaluate our \ac{pas} algorithm under meal disturbances, diurnal variability in insulin sensitivity, and dawn phenomenon. The simulator provides a cohort of $100$ \textit{in silico} adult subjects spanning a range of human metabolic parameters, whose key demographic characteristics are summarized in Table~\ref{tab:demo}. 

\begin{table}[h]
\centering
\caption{Key demographic of the \textit{in silico} adult cohort}
\begin{tabular}{lcccc}
\hline
Parameter & Units & Mean (\sc{sd}) & Min & Max \\
\hline
Body weight & kg & 75.2 (12.1) & 52.6 & 108.4 \\
\ac{tdi} & U & 40.4 (12.98) & 21.2 & 104.7 \\
\ac{tdi}/kg & U/kg & 0.54 (0.15) & 0.33 & 1.22 \\
Fasting plasma glucose & mg/dl & 119.1 (7.1) & 102.1 & 134.5 \\
Duration of diabetes & years & 23.9 (11.4) & 3 & 52 \\
\hline
\label{tab:demo}
\end{tabular}
\end{table}

We simulated \(100\) \textit{in silico} adults under two scenarios: 1) without variability, with constant meals and metabolic parameters, and 2) with meal and metabolic variability. Our \ac{pas} uses a learning rate of \(\eta=1\) and dither amplitude \(c=0.1\), where the dither direction flips every day for \(\beta\) and every two days for \(\alpha\).

In the no-variability scenario, the meal carbohydrate content is $105$\,g for breakfast, lunch, and dinner at $8$, $13$, and $19$ hours, respectively. In the variable scenario, the meal carbohydrate content is $84 \pm 20$\,g with the same average timing and a timing variability of $\pm 20$ minutes. Three snacks are given randomly with $24 \pm 14$\,g at $10$, $16$, and $21$ hours, each with a timing variability of $\pm 20$ minutes. Hypoglycemia rescue carbohydrates are administered when glucose falls below $70$\,mg/dL followed by a $15$ minutes waiting period before delivering additional treatment. We use insulin Aspart and incorporate both intra-day and inter-day variability through variations in insulin sensitivity.

For the no-variability scenario, Fig.~\ref{fig:Jnovar} shows that the average risk decreases under adaptation compared to no adaptation, and that the parameters converge over time. In Fig.~\ref{fig:Jnovar}, the dashed yellow curve corresponds to an existing \(\gamma\)-adaptation method for scaling \ac{tdi}~\cite{pryor2025pogo}. The split blue curve shows that adapting \((\alpha,\beta)\) can achieve lower risk than adapting \(\gamma\) alone. The unevenly dashed purple curve represents a case in which \(\gamma\) is adapted by the existing method for one week, after which our \ac{pas} takes over and adapts \((\alpha,\beta)\), leading to convergence to a lower overall risk. Similarly, Fig.~\ref{fig:Jvar} shows the results for the case with variable meals and metabolism. Fig.~\ref{fig:param} shows parameter convergence for subjects $1-5$. Table~\ref{tab:stat} shows \ac{tir} ($G\in[70,180]$), \ac{tar} ($g>180$), and \ac{tbr} ($g<70$) with their corresponding \ac{sd} for the two scenarios.

\begin{figure}
    \centering
    \includegraphics[width=\linewidth]{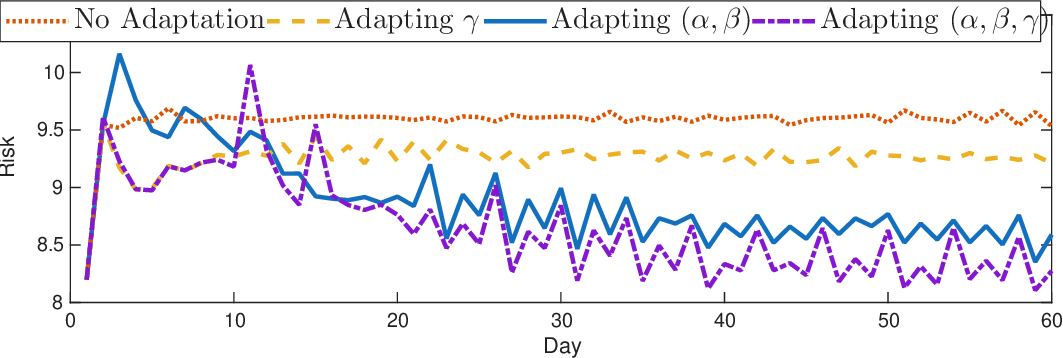}
    \caption{Mean of risk $J$ for $100$ subjects \textbf{without} variability.}
    \label{fig:Jnovar}
\end{figure}

\begin{figure}
    \centering
    \includegraphics[width=\linewidth]{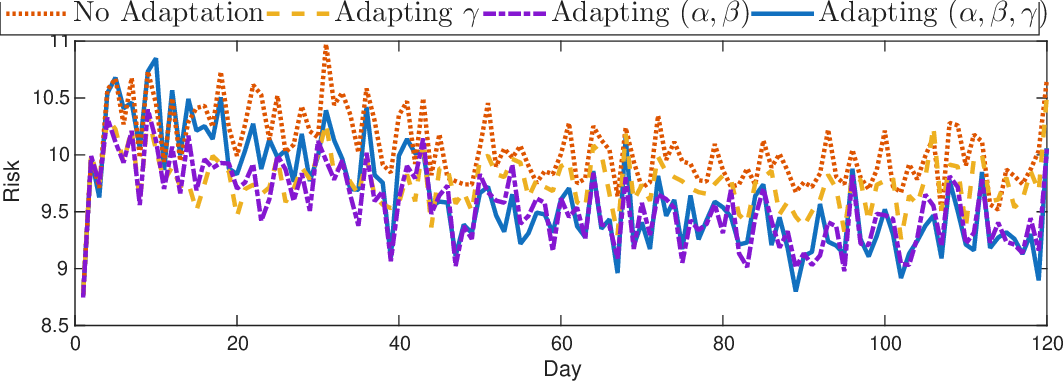}
    \caption{Mean of risk $J$ for $100$ subjects \textbf{with} variability.}
    \label{fig:Jvar}
\end{figure}

\begin{figure}[t]
    \centering

    \begin{subfigure}{0.49\linewidth}
        \centering
        \includegraphics[width=\linewidth]{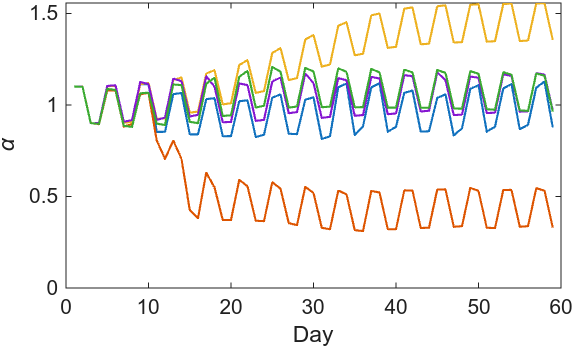}
        \caption{No variability}
    \end{subfigure}
    \hfill
    \begin{subfigure}{0.49\linewidth}
        \centering
        \includegraphics[width=\linewidth]{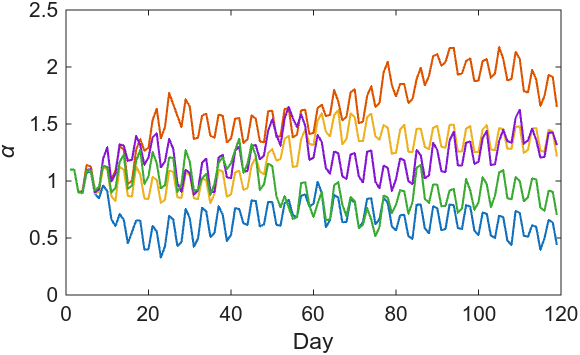}
        \caption{With variability}
    \end{subfigure}

    \begin{subfigure}{0.49\linewidth}
        \centering
        \includegraphics[width=\linewidth]{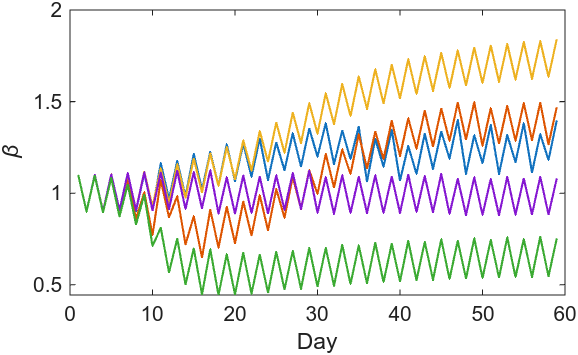}
        \caption{No variability}
    \end{subfigure}
    \hfill
    \begin{subfigure}{0.49\linewidth}
        \centering
        \includegraphics[width=\linewidth]{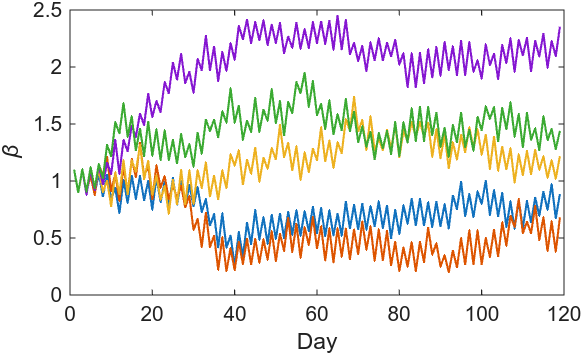}
        \caption{With variability}
    \end{subfigure}
    
    \caption{Adaptation of $\theta=(\alpha,\beta)$ for subjects $1-5$ under scenarios with and without meal and metabolic variability.}
    \label{fig:param}
\end{figure}

\input{table}

\section{Conclusion}
We presented a real-time, data-driven method for personalizing controller parameters in \ac{aid} systems. The multi-parameter framework supports integration of ultra-rapid insulin formulations, where global \ac{tdi} scaling alone cannot account for the altered \ac{pkpd} temporal profile. The adaptation law achieves input-to-state stability under disturbances.

% To overcome the brittle nature of standard gradient estimators in noisy biomedical applications, we introduced an \ac{rwls} estimator with spatial proximity weighting and adaptive regularization. We proved that under deterministic \ac{pe}, this estimator inherently bounds Taylor truncation errors and caps noise amplification. By projecting the parameter updates onto a compact admissible domain, we guaranteed robust \ac{iss} property even under persistent bounded metabolic disturbances. \textit{In silico} trials confirmed that our projected gradient descent algorithm dynamically customizes controller aggressiveness and steady-state targeting, achieving significant clinical improvements in Time-in-Range.

% Furthermore, our multi-parameter framework inherently supports the safe integration of ultra-rapid insulin formulations. As adjusting global insulin scalar factors cannot fully account for PK/PD phase shifts, dynamically adapting the aggressiveness and steady-state targeting parameters ($\alpha$ and $\beta$) empowers the controller to safely exploit the faster clearance of modern insulins. Because faster insulin clears the system more rapidly, the controller can respond more aggressively to postprandial hyperglycemia with a significantly reduced risk of delayed insulin stacking---a phenomenon that traditionally endangers patients during unpredictable follow-up physical exertion.

\bibliographystyle{IEEEtran}
\bibliography{references}

\end{document}

%% file: Packages.tex
\usepackage{amsmath,amssymb,bm} % for math symbols and notation
\usepackage{acronym}
\usepackage{color}
\usepackage{hyperref}
\usepackage{graphicx}		
\usepackage{mathtools}
\usepackage[normalem]{ulem}
\usepackage{comment}
\usepackage{algorithm,algorithmic}
\usepackage{cite}
\usepackage{xcolor}
\usepackage{subcaption}
\usepackage{tikz}
\usetikzlibrary{positioning, arrows.meta, calc, fit, backgrounds, decorations.pathreplacing}
\usepackage{multirow}
\usepackage{array}

\acrodef{cgm}[{\sc cgm}]{continuous glucose monitoring}
\acrodef{ap}[AP]{artificial pancreas}
\acrodef{aid}[{\sc aid}]{automated insulin delivery}
\acrodef{t1d}[{\sc t1d}]{type 1 diabetes}
\acrodef{mpc}[{\sc mpc}]{model-predictive control}
\acrodef{ls}[{\sc ls}]{least squares}
\acrodef{esc}[{\sc esc}]{extremum-seeking control}
\acrodef{rwls}[{\sc rwls}]{regularized weighted least squares}
\acrodef{gpi}[{\sc gpi}]{glucose penalty index}
\acrodef{tir}[{\sc tir}]{time-in-range}
\acrodef{titr}[{\sc titr}]{time-in-tight-range}
\acrodef{tar}[{\sc tar}]{time-above-range}
\acrodef{tbr}[{\sc tbr}]{time-below-range}
\acrodef{tdi}[{\sc tdi}]{total daily insulin}
\acrodef{bgi}[{\sc bgi}]{blood glucose index}
\acrodef{lbgi}[{\sc lbgi}]{low blood glucose index}
\acrodef{hbgi}[{\sc hbgi}]{high blood glucose index}
\acrodef{pas}[{\sc pas}]{performance-based adaptation system}
\acrodef{iob}[{\sc iob}]{insulin on board}
\acrodef{umpc}[{\sc umpc}]{University of Virginia \ac{mpc}}
\acrodef{bps}[{\sc bps}]{bolus priming system}
\acrodef{iss}[{\sc iss}]{input-to-state stable}
\acrodef{sd}[{\sc sd}]{standard deviation}
\acrodef{is}[{\sc is}]{insulin sensitivity}
\acrodef{pkpd}[{\sc pk/pd}]{pharmacokinetics/pharmacodynamics}
\acrodef{pe}[{\sc pe}]{persistently exciting}

%% file: CustomCommands.tex
\newcommand{\reals}{\mathbb{R}}
\newcommand{\naturals}{\mathbb{N}}

\newtheorem{theorem}{Theorem}
\newtheorem{corollary}{Corollary}
\newtheorem{lemma}{Lemma}
\newtheorem{remark}{Remark}

\newtheorem{problem}{Problem}

\newtheorem{assumption}{Assumption}

%% file: Block.tex
\begin{tikzpicture}[
    >={Stealth[length=7pt]},
    font=\huge,
    ctrl/.style={
        draw, thick, rounded corners=5pt,
        fill=orange!8, align=center,
        minimum width=5.6cm, minimum height=2.2cm,
        text width=5.2cm,
        inner sep=8pt
    },
    plant/.style={
        draw, thick, rounded corners=5pt,
        fill=blue!7, align=center,
        minimum width=6.1cm, minimum height=2.5cm,
        text width=5.7cm,
        inner sep=9pt
    },
    sensor/.style={
        draw, thick, rounded corners=5pt,
        fill=gray!8, align=center,
        minimum width=4.4cm, minimum height=1.9cm,
        text width=4.0cm,
        inner sep=7pt
    },
    adapt/.style={
        draw, thick, rounded corners=5pt,
        fill=green!8, align=center,
        minimum width=5.8cm, minimum height=3cm,
        text width=5.4cm,
        inner sep=8pt
    },
    dist/.style={
        draw, thick, rounded corners=5pt,
        fill=red!6, align=center,
        minimum width=9.0cm, minimum height=1.0cm,
        inner sep=6pt
    },
    arr/.style={->, thick},
    lbl/.style={font=\huge},
]

\node[ctrl] (aid) {%
    AID Controller\\[4pt]
    $u=\kappa(x;\theta)$%
};

\node[plant, right=3.4cm of aid] (system) {%
    Glucose--Insulin\\[2pt]
    Dynamics\\[4pt]
    $\dot x=f(x,u,d)$%
};

\node[sensor, right=3.2cm of system] (cgm) {%
    CGM\\[4pt]
    $G=g(x)+\nu$%
};

\node[dist, above=1.5cm of system] (disturbance) {%
    Disturbance $d$, meals, activity, IS%
};

\draw[arr] (aid) -- node[lbl, above] {$u$} (system);
\draw[arr] (disturbance) -- (system);
\draw[arr] (system) -- (cgm);

\node[adapt, below=2.5cm of aid] (update) {%
    Parameter Update\\[4pt]
    $\theta_{k+1}=\theta_k-\eta\,\hat{\nabla}J_k+\xi_k$%
};

\node[adapt, below=2.5cm of cgm] (risk) {%
    Risk Metric\\[4pt]
    $J_k=\mathrm{LBGI}_k+r\,\mathrm{HBGI}_k+P_{\mathrm{treat},k}$%
};

\node[adapt, left=3.2cm of risk] (rwls) {%
    RWLS Estimator\\[2pt]
};

\draw[arr] (risk.west) -- node[lbl, above] {$J_k$} (rwls.east);
\draw[arr] (rwls.west) -- node[lbl, above] {$\hat{\nabla}J_k$} (update.east);

\draw[arr, draw=green!50!black]
    (update.north) -- node[lbl, left, text=green!50!black] {$\theta$} (aid.south);

\draw[arr] (cgm.south) -- node[lbl, right] {$G_{1{:}N_s}$} (risk.north);

\coordinate (gdrop) at ([yshift=-1.45cm]cgm.south);
\coordinate (gin) at ([xshift=0.55cm]aid.south);
\draw[thick] (cgm.south) -- (gdrop);
\draw[arr] (gdrop) -| (gin);
\node[lbl] at ([xshift=0.4cm,yshift=-0.6cm]gin) {$G$};

\draw[decorate, decoration={brace, amplitude=5pt}, thick, draw=orange!70!black]
    ([yshift=0.5cm, xshift=-0.2cm]aid.north west) --
    ([yshift=0.5cm, xshift=0.2cm]cgm.north east)
    node[midway, above=9pt, font=\huge, text=orange!70!black] {%
        Fast timescale (minutes) contractive closed loop%
    };

\draw[decorate, decoration={brace, amplitude=5pt, mirror}, thick, draw=green!50!black]
    ([yshift=-0.15cm, xshift=-0.2cm]update.south west) --
    ([yshift=-0.15cm, xshift=0.2cm]risk.south east)
    node[midway, below=9pt, font=\huge, text=green!50!black] {%
        Slow timescale (days) gradient descent on $\bar J(\theta)$%
    };

\begin{scope}[on background layer]
    \node[
        fit=(aid)(system)(cgm)(disturbance),
        inner sep=22pt,
        yshift=-0.2cm,
        draw=orange!25,
        rounded corners=13pt,
        fill=orange!3,
        densely dashed
    ] {};
    \node[
        fit=(update)(rwls)(risk),
        inner sep=22pt,
        draw=green!25,
        rounded corners=13pt,
        fill=green!3,
        densely dashed
    ] {};
\end{scope}

\end{tikzpicture}

%% file: table.tex
\begin{table}[h]
\caption{Glycemic outcomes for two scenarios.}
\label{tab:stat}
\centering
\setlength{\tabcolsep}{4pt}
\renewcommand{\arraystretch}{1.1}
\begin{tabular}{|c|c|c|c|c|c|}
\hline
\rotatebox{90}{Vary} &
\rotatebox{90}{Week} &
{Adapt} &
\ac{tir}(\ac{sd}) &
\ac{tar}(\ac{sd}) &
\ac{tbr}(\ac{sd}) \\
\hline

\multirow{2}{*}{No}
& 4
& \begin{tabular}[c]{@{}c@{}}
No \\
$\gamma$ \\
$(\alpha,\beta)$ \\
$\gamma,\;(\alpha,\beta)$
\end{tabular}
& \begin{tabular}[c]{@{}c@{}}
67.52 (8.73) \\
67.54 (9.23) \\
68.20 (10.27) \\
67.43 (10.11)
\end{tabular}
& \begin{tabular}[c]{@{}c@{}}
31.23 (8.05) \\
31.41 (8.47) \\
30.96 (9.79) \\
31.89 (9.70)
\end{tabular}
& \begin{tabular}[c]{@{}c@{}}
1.25 (1.59) \\
1.04 (0.95) \\
0.84 (1.39) \\
0.68 (1.02)
\end{tabular}
\\ \cline{2-6}

& 8
& \begin{tabular}[c]{@{}c@{}}
No \\
$\gamma$ \\
$(\alpha,\beta)$ \\
$\gamma,\;(\alpha,\beta)$
\end{tabular}
& \begin{tabular}[c]{@{}c@{}}
67.50 (8.72) \\
67.56 (9.24) \\
69.17 (10.53) \\
68.80 (10.85)
\end{tabular}
& \begin{tabular}[c]{@{}c@{}}
31.25 (8.04) \\
31.41 (8.45) \\
30.00 (9.98) \\
30.55 (10.46)
\end{tabular}
& \begin{tabular}[c]{@{}c@{}}
1.25 (1.59) \\
1.03 (0.94) \\
0.82 (1.32) \\
0.66 (1.01)
\end{tabular}
\\
\hline

\multirow{3}{*}{Yes}
& 4
& \begin{tabular}[c]{@{}c@{}}
No \\
$\gamma$ \\
$(\alpha,\beta)$ \\
$\gamma,\;(\alpha,\beta)$
\end{tabular}
& \begin{tabular}[c]{@{}c@{}}
67.56 (9.44) \\
67.47 (10.57) \\
69.61 (9.79) \\
67.98 (10.62)
\end{tabular}
& \begin{tabular}[c]{@{}c@{}}
30.98 (8.73) \\
31.52 (8.93) \\
28.97 (9.07) \\
30.99 (9.97)
\end{tabular}
& \begin{tabular}[c]{@{}c@{}}
1.47 (1.66) \\
1.01 (0.76) \\
1.52 (1.70) \\
1.03 (0.91)
\end{tabular}
\\ \cline{2-6}

& 8
& \begin{tabular}[c]{@{}c@{}}
No \\
$\gamma$ \\
$(\alpha,\beta)$ \\
$\gamma,\;(\alpha,\beta)$
\end{tabular}
& \begin{tabular}[c]{@{}c@{}}
67.45 (9.28) \\
67.62 (10.77) \\
70.50 (10.48) \\
69.84 (11.12)
\end{tabular}
& \begin{tabular}[c]{@{}c@{}}
31.04 (8.54) \\
31.36 (10.06) \\
28.03 (9.69) \\
28.93 (10.48)
\end{tabular}
& \begin{tabular}[c]{@{}c@{}}
1.51 (1.69) \\
1.03 (0.79) \\
1.47 (1.56) \\
1.23 (1.13)
\end{tabular}
\\ \cline{2-6}

& 17
& \begin{tabular}[c]{@{}c@{}}
No \\
$\gamma$ \\
$(\alpha,\beta)$ \\
$\gamma,\;(\alpha,\beta)$
\end{tabular}
& \begin{tabular}[c]{@{}c@{}}
67.54 (9.58) \\
67.65 (10.98) \\
71.18 (10.65) \\
70.89 (11.31)
\end{tabular}
& \begin{tabular}[c]{@{}c@{}}
31.02 (8.89) \\
31.36 (10.33) \\
27.37 (10.06) \\
27.77 (10.60)
\end{tabular}
& \begin{tabular}[c]{@{}c@{}}
1.44 (1.67) \\
0.99 (0.74) \\
1.45 (1.72) \\
1.34 (1.28)
\end{tabular}
\\
\hline
\end{tabular}
\end{table}